\documentclass[journal,onecolumn,draftcls,12pt]{IEEEtran}

\usepackage{amsmath,amssymb,amsfonts,amsthm}
\usepackage{cite}
\usepackage{boxedminipage}
\usepackage{color}

\newtheorem{theorem}{Theorem}[section]
\newtheorem{lemma}[theorem]{Lemma}
\newtheorem{proposition}[theorem]{Proposition}

\theoremstyle{definition}
\newtheorem{definition}[theorem]{Definition}
\newtheorem{example}[theorem]{Example}

\newtheorem{conjecture}[theorem]{Conjecture}

\begin{document}

\title{A Novel Approach to Counterexamples \\of the Polujan-Pott Conjecture \\via Set-Partition Permutations\thanks{ This paper was supported by Konkuk University in 2025. This work was supported by the National Natural Science Foundation of China (No. 62372247), and by open research fund of State Key Laboratory of Cyberspace Security Defense (No. 2025-MS-03).
}
}

\author{Yansheng Wu, Jiaxin Wang, Jong Yoon Hyun \thanks{Y. Wu is  with the School of Computer Science, Nanjing University of Posts and Telecommunications, Nanjing
210023, China  and the State Key Laboratory of Cyberspace Security Defense (Institute of Information Engineering, Chinese Academy of Sciences, Beijing 100085), China. 
 Email: yanshengwu@njupt.edu.cn. }

\and  \thanks{J. Wang is with the School of Mathematics, Hefei University of Technology, Hefei, 230601, China,
 email: wjiaxin@hfut.edu.cn. }

\and  \thanks{ J. Y. Hyun is with the Konkuk University, Glocal Campus, 268 Chungwon-daero Chungju-si Chungcheongbuk-do 27478, South Korea, Email: hyun33@kku.ac.kr.}}

\date{\today}
\maketitle

\begin{abstract}

In this paper, we settle a conjecture of Polujan and Pott  by constructing an explicit, infinite family of Maiorana--McFarland bent functions $f_t$ in $2(2^t-1)$ variables with algebraic degree $\deg(f_t) = t + 1$ for any integer $t \ge 2$. Our construction builds upon a minimal commutative  algebra $I_t$, which naturally induces a triangular set-partition polynomial permutation $P_t$. By identifying an elementary abelian subgroup within the direct sum $ I_t \oplus I_t^*$, we establish an explicit nonlinear coordinate transformation that pulls $f_t$ back to a canonical quadratic form. This linearizes the translation development $\operatorname{Dev}(D_{f_t})$ under an exotic group structure and proves that it is isomorphic to the classical symplectic design $S^\pm(2(2^t-1))$, thereby fully resolving the conjecture.
\end{abstract}

\begin{IEEEkeywords}
Bent functions, Maiorana--McFarland class, set-partition permutations, translation designs, symplectic designs.
\end{IEEEkeywords}

\section{Introduction}

Let $\mathbb{F}_2 = \{0, 1\}$ denote the finite field of two elements. For a positive integer $n$, $\mathbb{F}_2^n$ denotes the $n$-dimensional vector space over $\mathbb{F}_2$. For $x, y \in \mathbb{F}_2^n$, the standard inner product is $x \cdot y = \sum_{i=1}^n x_i y_i \pmod 2$. A Boolean function is just a function from $\mathbb{F}^n_2$ to $\mathbb{F}_2$.  We denote by $\mathcal{BF}_n$ the set of Boolean functions in $n$ variables. 
A vectorial Boolean function is a function $F: \mathbb{F}_2^n \to \mathbb{F}_2^m$, often written as $F = (f_1, \dots, f_m)$, where each $f_i: \mathbb{F}_2^n \to \mathbb{F}_2$ is a Boolean function. When $m=1$, this reduces to a standard Boolean function. 

\IEEEPARstart{B}{oolean} functions achieving the maximum possible distance to the space of affine functions are known as \emph{bent functions}. Introduced by Rothaus \cite{R}, bent functions are central objects in algebraic coding theory, symmetric cryptography, finite geometry, and sequence design. A prominent and widely studied primary construction is the \emph{Maiorana--McFarland} (M-M) class \cite{D, M1}. A Boolean function $f : \mathbb{F}_2^n \times \mathbb{F}_2^n \to \mathbb{F}_2$ belongs to the Maiorana--McFarland class if it can be represented in the form
\begin{equation}
f(x, z) = z \cdot \pi(x) + g(x),
\end{equation}
where $\pi : \mathbb{F}_2^n \to \mathbb{F}_2^n$ is a permutation and $g : \mathbb{F}_2^n \to \mathbb{F}_2$ is an arbitrary Boolean function.

In incidence geometry, bent functions are intrinsically linked to difference sets and incidence structures, see \cite{DMT,DT,DT0,DT1}. Bending \cite{B} introduced two families of non-isomorphic 2-designs--addition designs and translation designs--constructed via bent functions; notably, both families share identical parameters when the dual of the underlying bent function evaluates to zero at the origin. Dempwolff and Neumann \cite{DN} later generalized this framework to $r$-plateaued functions. In a comprehensive survey, Polujan \cite{P} systematically reviewed Boolean and vectorial cryptographic functions (including bent, plateaued, and differentially uniform functions) alongside their derived incidence structures. Polujan and Pott \cite{PP} further applied Boolean and vectorial bent functions to construct both addition and translation designs, while Meidl, Polujan, and Pott \cite{MPP} extended the scope to $(n,m)$-functions, offering a design-theoretic characterization of $(n,m)$-plateaued and $(n,m)$-bent functions. 



\subsection{A Conjecture of Polujan and Pott}
In order to state the conjecture of Polujan and Pott, we begin with some basic definitions.

\begin{definition} \cite{B,CD}
Let $\mathcal{D} = (P, \mathcal{B})$ be a incidence structure (or block design), where $P$ is a finite set of points and $\mathcal{B}$ is a collection (or multiset) of non-empty subsets of $P$ called blocks.

An \emph{automorphism} of $\mathcal{D}$ is a bijection $\pi : P \to P$ such that
\begin{equation}
B \in \mathcal{B} \implies \pi(B) = \{\pi(x) : x \in B\} \in \mathcal{B},
\end{equation}
preserving the multiplicities of the blocks in $\mathcal{B}$. The \emph{full automorphism group} of $\mathcal{D}$, denoted $\mathrm{Aut}(\mathcal{D})$, is the set of all automorphisms of $\mathcal{D}$ equipped with the operation of permutation composition:
\begin{equation}
\mathrm{Aut}(\mathcal{D}) = \{\pi \in \mathrm{Sym}(P) : \pi(\mathcal{B}) = \mathcal{B}\}.
\end{equation}
Any subgroup $G \le \mathrm{Aut}(\mathcal{D})$ is called a \emph{group of automorphisms} of $\mathcal{D}$.
\end{definition}

\begin{definition}\cite{K}\label{2-t}
Let $G$ be a group acting on a finite set $X$ with $|X| \ge 2$. The action of $G$ on $X$ is said to be \emph{$2$-transitive} (or \emph{doubly transitive}) if for any two ordered pairs of distinct elements $(x_1, x_2)$ and $(y_1, y_2)$ in $X$ (where $x_1 \neq x_2$ and $y_1 \neq y_2$), there exists at least one group element $g \in G$ such that
$
g \cdot x_1 = y_1 \quad \text{and} \quad g \cdot x_2 = y_2.
$
Equivalently, the induced natural action of $G$ on the set of ordered distinct pairs
$
\Omega = \{ (x, y) \in X \times X : x \neq y \}
$
is transitive.
\end{definition}

\begin{definition}{ \cite{M} \label{affine}  Two Boolean functions $f$ and $g$ in $\mathcal{BF}_n$ are \emph{extended affine equivalent} (simply, EA-equivalent, or equivalent)  if there is an affine permutation $\sigma$ of $\mathbb{F}_2^n$ and an affine map $\pi$ from $\mathbb{F}_2^n$ to $\mathbb{F}_2$ such that $g(x)=f(\sigma x)+\pi x$, where $\sigma x=xA+a$ for $A$ is a non-singular matrix of size $n$, $a\in\mathbb{F}^n_2$ and $\pi x=b\cdot x+\varepsilon$ for $b\in\mathbb{F}^n_2$ and $\varepsilon\in\mathbb{F}_2$.

}
\end{definition}


We are ready to state the conjecture of Polujan and Pott. See Section II for the definition of $\operatorname{Dev}(D_f)$.

\begin{conjecture} \cite[Open Problem V.2]{PP}\label{p1}
{\rm
    Let $n\geq 3$ and let $f$ be a bent function in $\mathcal{BF}_{2n}$. A translation design $\operatorname{Dev}(D_f)$ derived from a bent function $f$ has a $2$-transitive automorphism group if and only if $f$ is equivalent to an M-M bent function of the form $x\cdot y + g(y)$ with $\deg(g) \leq 3$.
    }
\end{conjecture}

Our main motivation of this paper is to study the conjecture \ref{p1}. Our approach to proving this conjecture draws on three distinct areas: finite group theory, algebraic coding theory, and combinatorial design theory. Establishing that the $2$-transitivity of the automorphism group $\mathrm{Aut}(\operatorname{Dev}(D_f))$ strictly characterizes cubic Maiorana--McFarland (M-M) bent functions of the form $f(x,y) = x \cdot y + g(y)$ with $\deg(g) \le 3$ requires overcoming several structural and theoretical bottlenecks. 

\subsection{Main Contribution}

In this paper, we provide a complete 
resolution to Conjecture \ref{p1}. Our main contribution is summarized in the following theorems.

\begin{theorem}\label{theorem1}
    The sufficiency part of Conjecture \ref{p1} is true.
\end{theorem}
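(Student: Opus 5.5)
The plan is to prove that for $f(x,y)=x\cdot y+g(y)$ with $\deg(g)\le 3$ the translation design $\operatorname{Dev}(D_f)$ is isomorphic to the classical symplectic design $S^{\pm}(2n)$. For the latter, the group of translations extended by $\mathrm{Sp}(2n,2)$ acts by automorphisms. This group is $2$-transitive on points, since translations are transitive and $\mathrm{Sp}(2n,2)$ is transitive on nonzero vectors. Transporting it through the isomorphism gives a $2$-transitive subgroup of $\mathrm{Aut}(\operatorname{Dev}(D_f))$. The reduction to this normal form is routine. EA-equivalence $f\mapsto f(\sigma\,\cdot)+\pi$ induces an isomorphism of translation designs, up to passing to the complementary design when a constant is added. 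So we may assume $f$ is exactly $x\cdot y+g(y)$, and we split $g=c+q_0$ into a cubic part $c$ and a part $q_0$ of degree $\le 2$.

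The key step is a nonlinear coordinate change. Assign each cubic monomial $y_iy_jy_k$ of $c$ to one of its variables. This lets us write $c(y)=y\cdot K(y)$ for a vector map $K:\mathbb{F}_2^n\to\mathbb{F}_2^n$ whose coordinates are quadratic. Put $\psi(x,y)=(x+K(y),y)$. This is a bijection, and
\[
f=q\circ\psi,\qquad q(u,v)=u\cdot v+q_0(v),
\]
with $q$ quadratic. A translate $D_f+(a,b)$ is mapped by $\psi$ to the image of $D_q$ under the map
\[
(u,v)\mapsto\bigl(u+a+K(b)+\beta(v,b),\,v+b\bigr),
\]
where $\beta$ is the polar (bilinear) map of $K$. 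This map is affine in $(u,v)$. Hence every block of $\psi(\operatorname{Dev}(D_f))$ is the zero set, or the support, of a quadratic form $q_{a,b}$ obtained from $q$ by an affine substitution. Equivalently, the blocks are the translates of $D_q$ under the group
\[
G=\{(u,v)\mapsto(u+c'+\beta(v,b),\,v+b)\},
\]
with group law $(c',b)(c'',b')=(c'+c''+\beta(b',b),\,b+b')$. This group is elementary abelian precisely when $\beta$ is symmetric with $\beta(b,b)=0$. I would choose the splitting of monomials in $K$ so that this holds, using that the third derivative of a cubic is a symmetric alternating trilinear form.

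The main obstacle is the next comparison. I must show that the family $\{q_{a,b}\}$ of quadratic forms all share one fixed nondegenerate polar form, and that every quadratic form with that polar form and the right Arf type arises exactly once, up to adding an affine function. Expanding, $q_{a,b}(u,v)=q(u,v)+v\cdot\beta(v,b)+(\text{affine})$. The term $v\cdot\beta(v,b)$ has polar form $(v,w)\mapsto w\cdot\beta(v,b)+v\cdot\beta(w,b)$. By the alternating and symmetric choice of the trilinear form, this is the polar of $v\mapsto T(v,v,b)$, and $T(v,v,b)$ is linear in $v$ over $\mathbb{F}_2$.

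If this fails for the naive $K$, my first fallback is a second correction $\psi'(u,v)=(u+M_b v,v)$. Here $M_b$ is a symmetric matrix absorbing the discrepancy. Both the correction and the translation part depend on the single $\mathbb{F}_2$-linear parameter $b$, so the corrected maps still form a group. Once the polar forms are shown to agree, the family $\{q+\ell:\ell \text{ affine}\}$ of the appropriate type is exactly the block set of $S^{\pm}(2n)$. Counting, there are $2^{2n}$ blocks on both sides, and the $q_{a,b}$ are pairwise distinct because $f$ is bent, so the translates of $D_f$ are distinct. This gives $\operatorname{Dev}(D_f)\cong S^{\pm}(2n)$, and the $2$-transitivity follows as described in the first paragraph.
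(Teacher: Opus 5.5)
\textbf{Gap: your stated choice of $K$ breaks the key step.} Your strategy is the paper's mechanism: a quadratic shear $x\mapsto x+K(y)$ that absorbs the cubic part, as in the maps $\pi_{i,j,k}$ of Lemma~\ref{lemma5}. But the $K$ you actually specify makes the key step fail, and the test you use to choose $K$ cannot detect this.

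Suppose each cubic monomial goes to a single variable, e.g.\ $c=y_1y_2y_3$ and $K(y)=(y_2y_3,0,0)$. Then $\beta(v,b)=(v_2b_3+v_3b_2,0,0)$ and $v\cdot\beta(v,b)=b_3v_1v_2+b_2v_1v_3$, which is quadratic in $v$. So for $(b_2,b_3)\neq(0,0)$, $q_{a,b}$ has a different polar form from $q$. Hence $\psi(D_f+(a,b))$ is neither a translate of $D_q$ nor of its complement.

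Your criterion (that $G$ be elementary abelian) is vacuous. The polar of \emph{any} homogeneous quadratic map over $\mathbb{F}_2$ is symmetric with $\beta(b,b)=0$, so the bad $K$ passes it. What you actually need is that $(w,v,b)\mapsto w\cdot\beta(v,b)$ be symmetric in all three arguments. You get this by taking $K$ to be the formal gradient of $c$: split each $y_iy_jy_k$ over all three variables, putting $y_jy_k$ into $K_i$, $y_iy_k$ into $K_j$ and $y_iy_j$ into $K_k$, exactly as the paper does. Then $y\cdot K(y)=3c(y)=c(y)$, and $w\cdot\beta(v,b)$ is the third derivative of $c$ in the directions $w,v,b$.

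Your fallback $(u,v)\mapsto(u+M_bv,v)$ does not repair a bad choice. A map on points cannot depend on the block parameter $b$. Replacing $G$ by a different group changes the block set rather than re-describing it, since $G$ is determined once $\psi$ is fixed.

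\textbf{With the gradient choice.} Your argument then collapses to the paper's identity. Expanding monomial by monomial gives
\[
\bigl(K(y)+K(b)\bigr)\cdot(y+b)=c(y+b),
\]
so $q(\psi(v)+\psi(w))=f(v+w)$ and $\psi(D_f+w)=D_q+\psi(w)$ for every $w$. Thus $\psi$ is directly an isomorphism $\operatorname{Dev}(D_f)\to\operatorname{Dev}(D_q)$. The group $G$, the polar-form and Arf-type comparison, and the counting step become unnecessary, though your counting argument is valid once $v\cdot\beta(v,b)$ is known to be affine.

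The rest of your plan is sound and in places cleaner than the paper. You stop at $q=u\cdot v+q_0(v)$, which is already a nondegenerate quadratic function, instead of stripping quadratic and linear terms with $\pi_{i,j}$ and $\pi_i$. You also prove $2$-transitivity directly from translations and $\mathrm{Sp}(2n,2)$, whereas the paper cites Kantor; the classification theorem is not needed for sufficiency.

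Finally, drop the general claim that adding a linear function preserves the translation design. It is not justified as stated, and you do not need it: $x\cdot y+g(y)+a\cdot x+b\cdot y=(x+b)\cdot(y+a)+g(y)+a\cdot b$, which is again of the required form after translating the argument.
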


The following theorem shows that, since $t \ge 2$ is arbitrary, the algebraic degree $\deg(f_t) = t + 1$ is unbounded. Consequently, Conjecture~\ref{p1} is not necessary. 
We refer the reader to Section II for basic definitions of the algebraic degree, symplectic designs, and translation designs.
\begin{theorem}\label{thm:main-intro}
For every integer $t \ge 2$, let $N = 2^t - 1$. There exists an explicitly constructed permutation $P_t : \mathbb{F}_2^N \to \mathbb{F}_2^N$ (see, Definition \ref{deft}) such that the Boolean function $f_t$ of $\mathbb{F}_2^N \times \mathbb{F}_2^N$ defined as
\begin{equation}
f_t(x, z) = z \cdot P_t(x) 
\end{equation}
satisfies the following structural properties:
\begin{enumerate}
    \item $f_t$ is a Maiorana--McFarland bent function in $2N$ variables.
    \item The algebraic degree of $f_t$ is given by $\deg(f_t) = t + 1$.
    \item The translation design $\operatorname{Dev}(D_{f_t})$ induced by $f_t$ is isomorphic to the classical symplectic design $S^\pm(2N)$.
\end{enumerate}
\end{theorem}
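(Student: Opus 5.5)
We do not yet know the definition of $P_t$, so we propose one consistent with the abstract and then prove the three properties for it. Index the $N=2^t-1$ coordinates of $\mathbb{F}_2^N$ by the nonempty subsets $S\subseteq\{1,\dots,t\}$. Our guess for the ``triangular set-partition'' permutation is
\begin{equation*}
P_t(x)_S=\sum_{\{S_1,\dots,S_k\}\vdash S} x_{S_1}x_{S_2}\cdots x_{S_k},
\end{equation*}
where the sum runs over all set partitions of $S$. This is the coordinate form of the exponential map $x\mapsto \exp(x)-1$ in the minimal commutative algebra $I_t$. Here $I_t$ is the augmentation ideal of $\mathbb{F}_2[e_1,\dots,e_t]/(e_i^2)$, with basis $e_S=\prod_{i\in S}e_i$, and the divided-power structure makes the formula characteristic-free. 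The coordinate $P_t(x)_S$ equals $x_S$ plus a polynomial in the variables $x_{S'}$ with $S'\subsetneq S$. Hence $P_t$ is triangular with respect to inclusion, and it is a bijection: we solve for $x_S$ by induction on $|S|$, which is the inverse logarithm.

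\emph{Parts 1 and 2.} Since $P_t$ is a permutation, $f_t(x,z)=z\cdot P_t(x)$ is bent by the standard Maiorana--McFarland criterion. For the degree, $z\cdot P_t(x)=\sum_S z_S P_t(x)_S$. The monomials $z_S\,x_{S_1}\cdots x_{S_k}$ are distinct for distinct pairs (partition, $S$), so no cancellation can occur. Their maximal degree is $1+|S|$, attained by the singleton partition of $S=\{1,\dots,t\}$. This gives $\deg f_t=t+1$.

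\emph{Part 3, the main step.} Recall that $\operatorname{Dev}(D_f)$ has point set $\mathbb{F}_2^{2N}$ and blocks $D_f+a$, where $D_f$ is the support of $f$. We look for a bijection $\Phi$ of $\mathbb{F}_2^{2N}$ and a nondegenerate quadratic form $Q$ with two properties:
\begin{itemize}
\item[(i)] $f_t=Q\circ\Phi$ (up to an additive constant);
\item[(ii)] $\Phi(u+a)=\Phi(u)\oplus\Phi(a)$, where $\oplus$ is ordinary vector addition, i.e.\ $\Phi$ carries the translation group to another regular elementary abelian group.
\end{itemize}
Given (i) and (ii), $\Phi$ maps each block $D_{f_t}+a$ onto the block $D_Q+\Phi(a)$. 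So $\operatorname{Dev}(D_{f_t})\cong\operatorname{Dev}(D_Q)$, and the latter is $S^\pm(2N)$ by definition.

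To build $\Phi$, write $I_t^*$ for the dual space of $I_t$. We take $\Phi(x,z)=(P_t(x),z)$, so $f_t=Q\circ\Phi$ with $Q(y,z)=z\cdot y$, the canonical quadratic form. The difficulty is that $\Phi$ is not additive for ordinary addition in the $x$-coordinate. To handle this, we use the elementary abelian subgroup of $I_t\oplus I_t^*$ suggested by the abstract. Using $\exp(a)\exp(b)=\exp(a+b)$ and $\exp(a)^2=1$, we transport ordinary addition on $x$ to a group law on $y=P_t(x)$, namely $y\ast y'=y+y'+yy'$. This law is commutative and of exponent $2$, since $y\ast y=y^2\cdot(\ldots)$ and $(1+y)^2=1$ in characteristic $2$ with $e_i^2=0$. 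On $z$, translation is ordinary addition.

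We then check that translations in the new group preserve the family of blocks: the block $\{(y,z): z\cdot y=1\}$ shifted by $(b,c)$ must again be a block of $\operatorname{Dev}(D_Q)$. This requires a compensating change of the $z$-coordinate, $z\mapsto M_b^{*}z$, where $M_b$ is multiplication by $(1+b)$ and $M_b^*$ is its adjoint. Because $M_b$ is an involutive linear automorphism, the pair $(P_t,\{M_b^*\})$ gives a semidirect twist that is again elementary abelian. We expect this verification to be the main obstacle: we must choose $\Phi$ (possibly $\Phi(x,z)=(P_t(x),\,M_{P_t(x)}^{*}z)$ or a similar twist) so that (ii) holds exactly while $Q\circ\Phi$ remains equal to $f_t$. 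Our first attempt is to compute $D_{f_t}+(a,c)$ directly using the triangularity of $P_t$, and to check it against $\Phi^{-1}$ applied to a translate of $D_Q$.
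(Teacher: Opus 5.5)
Parts 1 and 2 are correct and match the paper: you use triangularity of $P_t$ with the Maiorana--McFarland criterion, and the uncancelled monomial $z_{[t]}x_{\{1\}}\cdots x_{\{t\}}$. The gap is Part 3, which stops at ``we expect this verification to be the main obstacle''. Nothing there is proved, and the framework is inconsistent as written. If $\oplus$ in (ii) is ordinary addition, then $\Phi$ is $\mathbb{F}_2$-linear, so $Q\circ\Phi$ has degree at most $2$; this contradicts $\deg f_t=t+1\ge 3$, so (i) and (ii) can never hold together. Your deduction that $\Phi(D_{f_t}+a)=D_Q+\Phi(a)$ relies on exactly that impossible additivity. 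What is needed is a homomorphism onto a \emph{different} regular group on $V=I_t\oplus I_t^*$, together with a separate reason why translates of $D_Q$ in that group are blocks of the ordinary development of $D_Q$. The paper uses three ingredients. The first is the idealization product $(r,\alpha)(s,\gamma)=(rs,\,r\gamma+s\alpha)$ with $(r\gamma)(v)=\gamma(rv)$. The second is the circle law $a\circ b=a+b+ab$. The third is the map $\Phi_t(x,z)=(r_x,(1+r_x)\beta_z)$, where $r_x=\sum_S P_S(x)e_S$ and $\beta_z=\sum_S z_Se_S^*$. This $\Phi_t$ is exactly your tentative twist $(P_t(x),M^*_{P_t(x)}z)$.

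The missing checks are short.

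(a) In $\prod_T(1+x_Te_T)$ only pairwise disjoint families of $T$'s survive, so $1+r_x=\prod_T(1+x_Te_T)$. Then $e_T^2=0$ gives $1+r_{x+y}=(1+r_x)(1+r_y)$. In the unitalized algebra $1+\Phi_t(x,z)=(1+r_x)(1+\beta_z)$, and $I_t^*\cdot I_t^*=0$, so $\Phi_t$ is a homomorphism $(\mathbb{F}_2^{2N},+)\to(V,\circ)$. It is bijective by triangularity and $(1+r_x)^2=1$.

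(b) Since $r_x^2=0$, we get $Q(\Phi_t(x,z))=\beta_z\bigl((1+r_x)r_x\bigr)=\beta_z(r_x)=f_t(x,z)$.

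(c) This is the step you flagged, and the paper itself only asserts it when it says $(I_t\oplus I_t^*,\circ,q)$ ``defines'' $S^\pm(2N)$. For $c=(r,\alpha)$ and $b=(s,\gamma)$, expanding with $r^2=s^2=0$ gives
\[
Q(c\circ b)=(\alpha+\gamma+r\gamma+s\alpha)(r+s+rs)=(\alpha+\gamma)(r+s)=Q(c+b).
\]
Since $(c\circ b)\circ b=c$, this gives $D_Q\circ b=D_Q+b$. Hence $\Phi_t(D_{f_t}+a)=D_Q\circ\Phi_t(a)=D_Q+\Phi_t(a)$, which is the block correspondence you wanted, now justified. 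Therefore $\operatorname{Dev}(D_{f_t})\cong\operatorname{Dev}(D_Q)=S^\pm(2N)$.
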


\subsection{Organization and Technical Approach}

The foundational mechanism of our proof relies on uncoupling the \emph{algebraic degree} of the Boolean function from the \emph{geometric isomorphism class} of its translation design. The paper is organized as follows:

   Section II sets up standard notation and preliminaries regarding bent functions and translation designs.
    Section III proves Theorem \ref{theorem1}.  Section IV aims to prove Theorem \ref{thm:main-intro}. To do it we first introduces $I_t$, a minimal finite-dimensional commutative algebra over $\mathbb{F}_2$; then  constructs the set-partition permutation $P_t$, defines $f_t(x,z) = z \cdot P_t(x)$, and evaluates its exact algebraic degree $\deg(f_t) = t + 1$,
     studies the direct sum $ I_t \oplus I_t^*$, identifies an  elementary abelian subgroup $(I_t \oplus I_t^*, \circ)$, and constructs an explicit group isomorphism $\Phi_t$ mapping $f_t$ to a canonical quadratic form $q$. This completes the isomorphism proof for $\operatorname{Dev}(D_{f_t}) \cong S^\pm(2N)$ and finish the proof of Theorem \ref{thm:main-intro}.
     Section V concludes the paper.

\section{Preliminaries and Background}

\subsection{Bent Functions and Algebraic Degree}

A Boolean function $f : \mathbb{F}_2^n \to \mathbb{F}_2$ is represented in its \emph{Algebraic Normal Form} (ANF) as
\begin{equation}
f(x_1, \ldots, x_n) = \sum_{S \subseteq \{1, \ldots, n\}} c_S \prod_{i \in S} x_i, \quad c_S \in \mathbb{F}_2.
\end{equation}
The \emph{algebraic degree} of $f$, denoted $\deg(f)$, is the maximum cardinality $|S|$ such that $c_S \neq 0$.  

The Walsh--Hadamard transform of $f$ at $\omega \in \mathbb{F}_2^n$ is defined by
\begin{equation}
W_f(\omega) = \sum_{x \in \mathbb{F}_2^n} (-1)^{f(x) + \omega \cdot x}.
\end{equation}
A function $f$ on $\mathbb{F}_2^n$ (with $n$ even) is \emph{bent} \cite{R} if $W_f(\omega) = \pm 2^{n/2}$ for all $\omega \in \mathbb{F}_2^n$. The Walsh-Hadamard transform of a bent function $f$ can be written as $W_{f}(\omega)=2^{n/2}(-1)^{f^{*}(\omega)}$, where $f^{*}: 
\mathbb{F}_{2}^{n} \rightarrow \mathbb{F}_{2}$ is called the \emph{dual} of $f$, see \cite{C3, M}

\subsection{Translation Designs}

Let $G = (\mathbb{F}_2^n, +)$ be an elementary abelian $2$-group. A subset $D \subset G$ is a Hadamard difference set if $f(x) = \mathbf{1}_D(x)$ is a bent function \cite{C3}. A translation design of a Boolean function $f$ is defined as the  \emph{ development}  of a certain set $D$, denoted $\operatorname{Dev}(D)$, is the incidence structure $(G, \mathcal{B})$ where the block set is
\begin{equation}
\mathcal{B} = \{ D + g : g \in G \}.
\end{equation}
Let $f$ be a bent function in $\mathcal{BF}_n$ and let $f^*$ be its dual. For each $b\in \mathbb{F}^n_2$, let $f_b(x)=f(x+b)$ and $B^{f_{b}}=\{x \in \mathbb{F}_{2}^{n}: f_{b}(x)=1\}$. A pair $(\mathbb{F}^n_2,\{B^{f_{b}}:b\in \mathbb{F}^n_2\})$ induces a symmetric $2$-design with parameters $$(2^n,2^{n-1}-(-1)^{f^*(\mathbf{0})}2^{\frac{n}{2}-1},2^{n-2}{-}(-1)^{f^*(\mathbf{0})}2^{\frac{n}{2}-1}).$$ Let $D_f=\{x\in \mathbb{F}^n_2:f(x)=1 \}$.
We call $\operatorname{Dev}(D_f)$ the translation design of a bent function $f$,  see \cite{PP}.


When $q(x)$ is a non-degenerate quadratic form on $\mathbb{F}_2^n$, the design $\operatorname{Dev}(\{x : q(x) = 1\})$ is called a classical \emph{symplectic design}, denoted $S^\pm(n)$, see \cite{B1}.

The following lemma will be need later.

\begin{lemma}\cite{K}
\label{thm:symplectic_2_transitive}
Let $V = \mathbb{F}_2^{2n}$ ($n \ge 2$) be an even-dimensional vector space equipped with a non-degenerate quadratic form $q : V \to \mathbb{F}_2$. Let $\mathcal{D} = (V, \mathcal{B})$ be the classical symplectic design $S^+(2n)$ (or $S^-(2n)$), where $V$ is the point set and the block set is
\begin{equation}
\mathcal{B} = \{ D + v : v \in V \}, \quad \text{with } D = \{ x \in V : q(x) = 1 \}.
\end{equation}
Then, the automorphism group $\mathrm{Aut}(\mathcal{D})$ acts $2$-transitively on the point set $V$.
\end{lemma}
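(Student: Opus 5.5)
The plan is to exhibit an explicit subgroup of $\mathrm{Aut}(\mathcal{D})$ that is already $2$-transitive, namely the affine group $V\rtimes \mathrm{Sp}(V,\beta)$. Here $\beta(x,y)=q(x+y)+q(x)+q(y)$ is the polar (alternating) form of $q$, and it is non-degenerate because $q$ is. Since $\mathrm{Aut}(\mathcal{D})$ contains this subgroup, its action is then $2$-transitive as well.

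\textbf{Step 1: translations.} For $w\in V$ the translation $\tau_w\colon x\mapsto x+w$ sends the block $D+v$ to the block $D+(v+w)$. It is therefore an automorphism, and $V$ embeds in $\mathrm{Aut}(\mathcal{D})$ as a regular subgroup.

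\textbf{Step 2: symplectic maps.} Let $A\in \mathrm{Sp}(V,\beta)$, a linear bijection preserving $\beta$. The form $q' = q\circ A^{-1}$ is a quadratic form with the same polar form $\beta$. Hence $q'+q$ is additive, so it is a linear functional. By non-degeneracy of $\beta$ there is $c\in V$ with
\[ q(A^{-1}y)=q(y)+\beta(y,c) \quad\text{for all } y. \]
Using $q(y+c)=q(y)+q(c)+\beta(y,c)$, this becomes $q(A^{-1}y)=q(y+c)+q(c)$.

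I must show $q(c)=0$. Otherwise $A(D)=\{y: q(A^{-1}y)=1\}$ would be the complement of $D+c$ rather than a block.

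\begin{itemize}
\item The form $y\mapsto q(y+c)+q(c)=q(y)+\beta(y,c)$ has Arf invariant $\mathrm{Arf}(q)+q(c)$. This is the standard fact that adding $\beta(\cdot,c)$ shifts the Arf invariant by $q(c)$; it can be checked by counting zeros, since $\sum_y(-1)^{q(y)+\beta(y,c)}=(-1)^{q(c)}\sum_y(-1)^{q(y)}$ after substituting $y\mapsto y+c$.
\item On the other hand, $q\circ A^{-1}$ is equivalent to $q$ under a linear bijection, so it has the same number of zeros and hence the same Arf invariant.
\end{itemize}

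Therefore $q(c)=0$, which gives $A(D)=D+c$. Linearity then gives $A(D+v)=D+c+Av$, so every $A\in\mathrm{Sp}(V,\beta)$ is an automorphism of $\mathcal{D}$. This sign bookkeeping is the only delicate point and the step I expect to be the main obstacle.

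\textbf{Step 3: $2$-transitivity.} Let $G=\langle \tau_w, A\rangle = V\rtimes \mathrm{Sp}(V,\beta)\le \mathrm{Aut}(\mathcal{D})$. Given pairs $(x_1,x_2)$ and $(y_1,y_2)$ with $x_1\ne x_2$ and $y_1\ne y_2$, first translate $x_1$ to $0$. It then suffices to show that the stabilizer $\mathrm{Sp}(V,\beta)$ of $0$ is transitive on $V\setminus\{0\}$, and then translate $0$ back to $y_1$.

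\begin{itemize}
\item Given nonzero $u$ and $u'$, non-degeneracy yields $w$ with $\beta(u,w)=1$. The hyperbolic pair $\{u,w\}$ extends to a symplectic basis of $V$ by induction on the orthogonal complement $\langle u,w\rangle^{\perp}$ (a Witt-type argument).
\item Do the same for $u'$. The linear map sending one symplectic basis to the other lies in $\mathrm{Sp}(V,\beta)$ and sends $u$ to $u'$.
\end{itemize}

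This gives $2$-transitivity of $G$, and hence of $\mathrm{Aut}(\mathcal{D})$, for both $S^+(2n)$ and $S^-(2n)$. The hypothesis $n\ge2$ is only needed to match the stated setting; the argument itself does not use it.
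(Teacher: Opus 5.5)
Your proof is correct, but it does not follow the paper, which gives no argument for this lemma and simply cites Kantor~\cite{K}. You instead prove the cited fact directly, by exhibiting the affine symplectic group $V\rtimes\mathrm{Sp}(V,\beta)$ inside $\mathrm{Aut}(\mathcal{D})$:
\begin{enumerate}
\item translations permute the blocks;
\item each $A\in\mathrm{Sp}(V,\beta)$ satisfies $q\circ A^{-1}=q+\beta(\cdot,c)$, hence $A(D)=\{y: q(y+c)+q(c)=1\}$;
\item the check $q(c)=0$ follows by comparing $\sum_y(-1)^{q(A^{-1}y)}=\sum_y(-1)^{q(y)}$ with $(-1)^{q(c)}\sum_y(-1)^{q(y)}$.
\end{enumerate}
Step 3 silently uses $\sum_y(-1)^{q(y)}=\pm 2^{n}\neq 0$, i.e.\ that a non-degenerate quadratic form is bent. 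This is precisely what makes the Arf invariant well defined, so nothing is missing, but it is the second place non-degeneracy enters and is worth stating explicitly. The transitivity of $\mathrm{Sp}(V,\beta)$ on nonzero vectors, via extending a hyperbolic pair to a symplectic basis, is standard and complete. So is the reduction of $2$-transitivity to this point stabilizer.

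Your route gives a self-contained, elementary proof. It identifies an explicit $2$-transitive group of automorphisms and treats $S^+(2n)$ and $S^-(2n)$ uniformly. As you observe, it also needs no restriction on $n$: for $n=1$ the group is $S_4$ acting on $4$ points. The paper's citation buys brevity and ties the lemma to Kantor's work on $2$-transitive symmetric designs, whose classification~\cite{K2} the paper invokes separately in the proof of Theorem~\ref{theorem1}.
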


\subsection{Symplectic vector spaces and symplectic isomorphism}

In this subsection, we will introduce some concepts on symplectic vector spaces and symplectic isomorphism, see \cite{G} for the details.

\begin{definition}
A \emph{symplectic vector space} over a field $F$ is a pair $(V, \omega)$, where $V$ is a finite-dimensional $F$-vector space and $\omega : V \times V \to F$ is a non-degenerate, alternating bilinear form, i.e.,
\begin{enumerate}
    \item $\omega(u, v) = -\omega(v, u)$ for all $u, v \in V$,
    \item For every non-zero $u \in V$, there exists $v \in V$ such that $\omega(u, v) \neq 0$.
\end{enumerate}
\end{definition}

\begin{definition}
Let $(V, \omega_V)$ and $(W, \omega_W)$ be two symplectic vector spaces over the same field $F$. A vector space isomorphism $T : V \to W$ is called a \emph{symplectic isomorphism} (or \emph{symplectomorphism} in the linear setting) if it preserves the symplectic form:
\begin{equation}
\omega_W(T(u), T(v)) = \omega_V(u, v) \quad \text{for all } u, v \in V.
\end{equation}
If such an isomorphism $T$ exists, $(V, \omega_V)$ and $(W, \omega_W)$ are said to be \emph{symplectically isomorphic}, denoted $(V, \omega_V) \cong (W, \omega_W)$.
\end{definition}

\section{Proof of Theorem \ref{theorem1}}
{
Polujan and Pott introduced \cite[Example III.6]{PP} that there exist inequivalent bent functions whose translation designs are isomorphic, and then they extended this example to an infinite family in \cite[Theorem III.8]{PP}. We further extend the set of such examples via a special class of M-M bent functions (Proposition \ref{thm3.9}). Lemma \ref{thm:symplectic_2_transitive}, Proposition~\ref{thm3.9}, and  Kantor’s classification theorem of 2-transitive symmetric designs \cite{K2} yield a positive answer to Theorem~\ref{theorem1}.

}



\begin{lemma}\label{lemma5}
  {\rm For an M-M bent function $f(x_1,\ldots,x_n,y_1,\ldots,y_n)=f(x,y)=x\cdot y+ g(y)$ in $\mathcal{BF}_{2n}$, the following statements hold.
\begin{itemize}
    \item [(i)] The translation design $\operatorname{Dev}(D_f)$ is isomorphic to the translation designs $\operatorname{Dev}(D_{f_i})$ derived from $f_{i}(x,y)=f(x,y)+y_i$ for all $i$ with $1\le i\le n$.
    
    \item [(ii)] For $n\geq 2$, the translation design $\operatorname{Dev}(D_f)$ is isomorphic to the translation designs $\operatorname{Dev}(D_{f_{ij}})$ derived from $f_{ij}(x,y)=f(x,y)+y_iy_j$ for all $i,j$ with $1\le i<j\le n$.
    
    \item [(iii)] For $n\geq 3$, the translation design $\operatorname{Dev}(D_f)$ is isomorphic to the translation designs $\operatorname{Dev}(D_{f_{ijk}})$ derived from $f_{ijk}(x,y)=f(x,y)+y_iy_jy_k$ for all $i,j,k$ with $1\le i<j<k\le n$.
     
    \end{itemize}
 } 
  
\end{lemma}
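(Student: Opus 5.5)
The plan is to realize each isomorphism as an explicit point bijection of the form $\varphi_H(x,y)=(x+H(y),\,y)$ on $\mathbb{F}_2^n\times\mathbb{F}_2^n$, with $H:\mathbb{F}_2^n\to\mathbb{F}_2^n$ chosen so that $y\cdot H(y)$ is exactly the monomial being added. Since
$$f(\varphi_H(x,y))=(x+H(y))\cdot y+g(y)=f(x,y)+y\cdot H(y),$$
we get $f'=f\circ\varphi_H$, hence $D_{f'}=\varphi_H^{-1}(D_f)$. It then remains to check that $\varphi_H$ maps every translate $D_{f'}+(a,b)$ onto some translate of $D_f$. Because $\varphi_H$ is a bijection between two symmetric designs with the same number of blocks, this suffices.

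For (i) and (ii), take $H$ linear. For (i) use $H(y)=y_ie_i$, so $y\cdot H(y)=y_i^2=y_i$. For (ii) use $H(y)=y_je_i$, so $y\cdot H(y)=y_iy_j$. Then $\varphi_H$ is a linear automorphism of $\mathbb{F}_2^{2n}$, and
$$\varphi_H(D_{f'}+v)=\varphi_H(D_{f'})+\varphi_H(v)=D_f+\varphi_H(v).$$
So translates go to translates, and these two parts are immediate.

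For (iii), a linear $H$ cannot produce a cubic term, so I take the symmetric quadratic choice
$$H(y)=y_jy_k\,e_i+y_iy_k\,e_j+y_iy_j\,e_k .$$
This gives $y\cdot H(y)=3y_iy_jy_k=y_iy_jy_k$ over $\mathbb{F}_2$. Now $H(y+b)=H(y)+H(b)+M_by$, where
$$M_by=e_i(y_jb_k+y_kb_j)+e_j(y_ib_k+y_kb_i)+e_k(y_ib_j+y_jb_i)$$
is linear in $y$. The symmetric choice of $H$ is designed to make $M_b$ alternating: $y\cdot M_by=2(y_iy_jb_k+y_iy_kb_j+y_jy_kb_i)=0$.

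Next I compute the image of a block. A point $(x,y)$ lies in $D_{f'}+(a,b)$ iff $f(x+a+H(y+b),\,y+b)=1$. Put $u=x+H(y)$, which is the first coordinate of $\varphi_H(x,y)$. The condition becomes $f(u+c+M_by,\,y+b)=1$ with $c=a+H(b)$. Expanding,
$$f(u+c+M_by,\,y+b)=(u+c)\cdot(y+b)+g(y+b)+M_by\cdot(y+b).$$
Using $y\cdot M_by=0$, the last term equals $M_by\cdot b$, which is a linear form $\ell_b(y)=y\cdot M_b^{T}b$. Moreover $\ell_b(y)=\ell_b(y+b)+\ell_b(b)$ and $\ell_b(b)=0$ by the alternating property. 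Hence $\ell_b(y)$ can be absorbed into the inner product by replacing $c$ with $c'=c+M_b^Tb$. Therefore
$$\varphi_H\bigl(D_{f'}+(a,b)\bigr)=D_f+(c',b),$$
so blocks map to blocks and $\varphi_H$ is the required isomorphism.

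The main obstacle is exactly this step in (iii). The cross term $M_by\cdot(y+b)$ must be at most affine in $y$, and it is only the symmetrized choice of $H$ (forcing $M_b$ to be alternating) that removes its quadratic part. So the bulk of the verification consists of checking $y\cdot M_by=0$ and $b\cdot M_bb=0$ carefully.
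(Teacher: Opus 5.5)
Your proposal is correct and follows essentially the paper's route. The paper uses exactly your maps $\varphi_H$ for (ii) and (iii) and checks the identity $f(\pi(v)+\pi(w))=f_{ijk}(v+w)$, i.e.\ $\pi(D_{f_{ijk}}+w)=D_f+\pi(w)$; your conclusion reduces to this once you note that your correction term vanishes identically ($M_b$ is symmetric with $M_bb=0$, so $M_b^Tb=0$ and $c'=a+H(b)$). The only difference is in (i), where the paper uses the translation $x_i\mapsto x_i+1$ (so $D_{f_i}$ is just a translate of $D_f$) rather than your linear shear $x\mapsto x+y_ie_i$, and both work.
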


\begin{proof}
    First, we prove (iii).  {Let $n\geq3$. Then for $1\le i<j<k\le n$, we can} define the map $\pi_{i,j,k}$ from $\mathbb{F}_2^{2n}$ to itself as
	\begin{align*}
		&\pi_{i,j,k}(x,y)=\pi_{i,j,k}(x_1,\ldots,x_n,y_1,\ldots,y_n)\\&=(x_1,\ldots,x_i+y_jy_k,\ldots,x_j+y_iy_k,\ldots,x_k+y_iy_j,\ldots,x_n,y_1,\ldots,y_n).
	\end{align*}
	{Since $\pi_{i,j,k}\circ\pi_{i,j,k}$ is an identity map, $\pi_{i,j,k}$ is a permutation of $\mathbb{F}^{2n}_2$.}
Then we have
       	\begin{align*}
		&f(\pi_{i,j,k}(x_1,\ldots,x_n,y_1,\ldots,y_n)+\pi_{i,j,k}(a_1,\ldots,a_n,b_1,\ldots,b_n))\\
		=&(x_1+a_1)(y_1+b_1)+\cdots+(x_i+y_jy_k+a_i+b_jb_k)(y_i+b_i)+\cdots\\
		&+(x_j+y_iy_k+a_j+b_ib_k)(y_j+b_j)+\cdots+(x_k+y_iy_j+a_k+b_ib_j)(y_k+b_k)+\cdots\\
		&+(x_n+a_n)(y_n+b_n)+g(y+b)\\
		=&(x_1+a_1)(y_1+b_1)+\cdots+(x_n+a_n)(y_n+b_n)\\
		&+(y_jy_k+b_jb_k)(y_i+b_i)+(y_iy_k+b_ib_k)(y_j+b_j)+(y_iy_j+b_ib_j)(y_k+b_k)+g(y+b)\\
		=&(x_1+a_1)(y_1+b_1)+\cdots+(x_n+a_n)(y_n+b_n)\\
		&+y_iy_jy_k+b_iy_jy_k+b_jy_iy_k+b_ky_iy_j+b_ib_jy_k+b_ib_ky_j+b_kb_jy_i+b_ib_jb_k+g(y+b)\\
        =& (x_1+a_1)(y_1+b_1)+\cdots+(x_n+a_n)(y_n+b_n)+(y_i+b_i)(y_j+b_j)(y_k+b_k)+g(y+b)\\
		=&f_{ijk}(x_1+a_1,\ldots,x_n+a_n,y_1+b_1,\ldots,y_n+b_n)\\
        =&f_{ijk}((x_1,\ldots,x_n,y_1,\ldots,y_n)+(a_1,\ldots,a_n,b_1,\ldots,b_n)),
	\end{align*}
	where $b=(b_1,\ldots,b_n)$. This proves (iii).  
    
    (ii). Let $n\geq 2$. For $1\le i<j\le n$, the map $\pi_{i,j}(x,y)=(x_1,\ldots,x_i+y_j,\ldots,x_n,y_1,\ldots,y_n)$ is a permutation of $\mathbb{F}^{2n}_2$. We have $f(\pi_{i,j}(x,y)+\pi_{i,j}(a,b))=f_{ij}((x,y)+(a,b))$, and the result follows.

    (i) Then for $1\le i\le n$, the map $\pi_i(x,y)=(x_1,\ldots,x_i+1,\ldots,x_n,y_1,\ldots,y_n)$ is a permutation of $\mathbb{F}^{2n}_2$. We can check that $f(\pi_{i}(x,y)+id(a,b))=f_i((x,y)+(a,b))$, and the result follows.  
    
        This completes the proof.
\end{proof}

{
\begin{proposition}\label{thm3.9}
{\rm
{Let $n\geq3$. For an M-M bent function $f(x_1,\ldots,x_n,y_1,\ldots,y_n)=f(x,y)=x\cdot y+ g(y)$ in $\mathcal{BF}_{2n}$, if $g$ has degree at most 3 and $g(\textbf{0})=0$ (resp., $g(\textbf{0})=1$), then the translation design $\operatorname{Dev}(D_f)$ is isomorphic to the translation design $\operatorname{Dev}(D_{x\cdot y})$ (resp., $\operatorname{Dev}(D_{x\cdot y+1})$). 

 } 
}
\end{proposition}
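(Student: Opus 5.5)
We argue by induction on the number of monomials in the algebraic normal form of $g$, using Lemma~\ref{lemma5} as the elementary step. Write
\begin{equation*}
g(y)=g(\mathbf{0})+\sum_{i}a_i y_i+\sum_{i<j}a_{ij}y_iy_j+\sum_{i<j<k}a_{ijk}y_iy_jy_k .
\end{equation*}
The key observation is that each part of Lemma~\ref{lemma5} holds for an \emph{arbitrary} $g$. Its proof only uses the permutations $\pi_i,\pi_{i,j},\pi_{i,j,k}$, which act on the $x$-coordinates through functions of $y$. These absorb the added monomial into the term $x\cdot y$, and $g(y+b)$ is simply carried along unchanged. Hence, for any Boolean function $h$ on $\mathbb{F}_2^n$ and any monomial $m(y)$ of degree $1$, $2$ or $3$ (here $n\ge 3$), the designs $\operatorname{Dev}(D_{x\cdot y+h(y)})$ and $\operatorname{Dev}(D_{x\cdot y+h(y)+m(y)})$ are isomorphic.

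The plan is as follows. Set $h_0=g(\mathbf{0})$, a constant. List the non-constant monomials of $g$ as $m_1,\dots,m_r$, and put $h_s=h_0+m_1+\cdots+m_s$, so that $h_r=g$. Applying Lemma~\ref{lemma5} with $g$ replaced by $h_{s-1}$ gives
\begin{equation*}
\operatorname{Dev}(D_{x\cdot y+h_{s-1}(y)})\cong\operatorname{Dev}(D_{x\cdot y+h_{s}(y)}),
\end{equation*}
using part (i), (ii) or (iii) according to whether $\deg m_s$ is $1$, $2$ or $3$. Since isomorphism of incidence structures is transitive, chaining these isomorphisms yields $\operatorname{Dev}(D_f)\cong\operatorname{Dev}(D_{x\cdot y+g(\mathbf{0})})$. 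This is $\operatorname{Dev}(D_{x\cdot y})$ if $g(\mathbf{0})=0$ and $\operatorname{Dev}(D_{x\cdot y+1})$ if $g(\mathbf{0})=1$.

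The main point needing care is justifying that Lemma~\ref{lemma5} really applies with $g$ replaced by an arbitrary intermediate $h_{s-1}$. I would make explicit that the displayed computation in its proof never used any property of $g$ beyond $g(y+b)$ appearing identically on both sides. I would also check that the identity $f(\pi(v)+\pi(w))=f'(v+w)$, established there, indeed gives a design isomorphism. Under the bijection $\pi$, the block $D_f+\pi(w)$ corresponds to $D_{f'}+w$: we have $v\in D_{f'}+w$ iff $f'(v+w)=1$ iff $\pi(v)\in D_f+\pi(w)$, using $f'(v+w)=f(\pi(v)+\pi(w))$ and characteristic $2$. As $w$ ranges over $\mathbb{F}_2^{2n}$, so does $\pi(w)$, so the block multisets correspond. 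Everything else is bookkeeping.
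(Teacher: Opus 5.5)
Your proof is correct and is essentially the paper's argument. Both chain the isomorphisms of Lemma~\ref{lemma5} one monomial at a time: the paper strips the cubic, then quadratic, then linear terms, while you add monomials back in any order starting from the constant $g(\mathbf{0})$. This works because that lemma is stated and proved for arbitrary $g$.

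One small slip in your design-isomorphism check: for part (i) the paper's identity is $f(\pi_i(v)+w)=f_i(v+w)$, not $f(\pi_i(v)+\pi_i(w))=f_i(v+w)$, and the latter fails since $\pi_i(v)+\pi_i(w)=v+w$. Your block-correspondence argument still goes through verbatim if you use the identity map on translation vectors in place of $w\mapsto\pi(w)$.
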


\begin{proof}
First, assume that $g$ is a cubic Boolean function with quadratic and linear terms. By applying Lemma \ref{lemma5} (iii) recursively, we can remove all cubic terms in $g$ while preserving the isomorphism, and then by applying Lemma \ref{lemma5} (ii) recursively, we can remove all quadratic terms in $g$ while preserving the isomorphism. Finally, by applying  Lemma \ref{lemma5} (i) recursively, we can remove all linear terms in $g$ while preserving the isomorphism. From these procedure, we obtain that $\operatorname{Dev}(D_f)$ is isomorphic to $\operatorname{Dev}(D_{x\cdot y})$ when $g(\textbf{0})=0$ and $\operatorname{Dev}(D_{x\cdot y+1})$ when $g(\textbf{0})=1$. This procedure is also valid for $g$ being any cubic, quadratic, or linear Boolean functions. 
\end{proof}
}

Now we are ready to prove Theorem \ref{theorem1}.

{\bf Proof of Theorem \ref{theorem1}:} As the condition  in Conjecture \ref{p1}, for a given function $f$, which is equivalent to an M-M bent function of the form $x\cdot y + g(y)$ with $\deg(g) \leq 3$, then the translation design $\operatorname{Dev}(D_f)$ is isomorphic to  $\operatorname{Dev}(D_{x\cdot y})$ or $\operatorname{Dev}(D_{x\cdot y+1})$ by Proposition \ref{thm3.9}.

By Kantor's classification theorem \cite{K2}, there is a unique symmetric \[2-(2^n, 2^{n-1} - 2^{n/2-1}, 2^{n-2} - 2^{n/2-1})\] design up to isomorphism with a $2$-transitive automorphism group, namely, the symplectic design ${S^{-}}(2n)$.  Thus its complementary design is unique up to isomorphism with a $2$-transitive automorphism group, namely, the symplectic design ${S^{+}}(2n)$. Moreover the classical symplectic translation designs $S^\pm(2n)$ are just generated by  quadratic bent functions $x\cdot y$ and $x\cdot y+1$. Then the result follows from Lemma \ref{thm:symplectic_2_transitive} and this completes the proof.

\section{Proof of Theorem \ref{thm:main-intro}}

This section is devoted to the proof of our second main theorem. Due to the length of the proof, we will divide into several subsections. Hereafter, set $[t] = \{1, 2, \ldots, t\}$ and $N=2^t-1$ for some integer $t\geq 2$.

%
\subsection{The vector space $I_t$}

In this subsection, we construct the algebraic foundation of our framework: a minimal finite-dimensional commutative square-zero algebra over $\mathbb{F}_2$.

Let $R_t = \mathbb{F}_2[u_1, \ldots, u_t] /  \langle u_1^2, \ldots, u_t^2\rangle$ be the quotient polynomial ring and let
\begin{equation}
I_t = \langle u_1, u_2, \ldots, u_t\rangle \subset R_t.
\end{equation}
It is easy to check that $I_t$ is the unique maximal ideal of $R_t$. For every non-empty index subset $S \subseteq [t]$, define the monomial basis elements as
\begin{equation}
e_S = \prod_{i \in S} u_i, \text{ where } e_{\emptyset}=1 \text{ by convention}.
\end{equation}
The ideal $I_t$ decomposes as a vector space over $\mathbb{F}_2$ in the following way.
\begin{equation}
I_t = \bigoplus_{\emptyset \neq S \subseteq [t]} \mathbb{F}_2 e_S.
\end{equation}
Then the dimension of $I_t$ over $\mathbb{F}_2$ is given by $\dim_{\mathbb{F}_2}(I_t) = 2^t - 1$.



The basic properties of $\mathbb{F}_2$-vector space $I_t$ are presented in the following proposition. 

\begin{proposition}\label{prop:mult-rules}
(1) For any non-empty subsets $S, T \subseteq [t]$, the multiplication of basis elements satisfies
\begin{equation}
e_S e_T =
\begin{cases}
e_{S \cup T}, & \text{if } S \cap T = \emptyset, \\
0, & \text{if } S \cap T \neq \emptyset.
\end{cases}
\end{equation}

Moreover, The ideal $I_t$ has the following properties:

(2) Every element $r \in I_t$ satisfies $r^2 = 0$.

(3) The nilpotency index of $I_t$, which is the smallest positive integer $k$ such that $I_t^k = 0$, is $t + 1$.
\end{proposition}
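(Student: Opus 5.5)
The plan is to work entirely with the monomial basis $\{e_S\}$ of $R_t$ and the defining relations $u_i^2 = 0$. For part (1), I will write $e_S e_T = \prod_{i\in S} u_i \prod_{j \in T} u_j$ and use commutativity of $R_t$ to collect equal variables. If $S\cap T=\emptyset$, every variable occurs at most once, so the product is the squarefree monomial $e_{S\cup T}$. If some $i\in S\cap T$ exists, the factor $u_i^2$ appears, and this is $0$ in $R_t$.

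A preliminary point to note is that the squarefree monomials $e_S$, $S\subseteq[t]$, form an $\mathbb{F}_2$-basis of $R_t$. The ideal $\langle u_1^2,\dots,u_t^2\rangle$ is a monomial ideal whose standard monomials are exactly the squarefree ones. This is what justifies the decomposition of $I_t$ already stated in the paper.

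For part (2), I will take $r=\sum_{S\neq\emptyset} c_S e_S$ and expand $r^2$ over $\mathbb{F}_2$. Since $R_t$ is commutative of characteristic $2$, the Frobenius map is additive, so $r^2=\sum c_S^2 e_S^2$. Each term vanishes because $e_S^2 = e_Se_S$ and $S\cap S=S\neq\emptyset$, so part (1) gives $e_S^2 = 0$.

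For part (3), I will show two things: $I_t^{t+1}=0$ and $I_t^{t}\neq 0$.
\begin{itemize}
\item $I_t^{k}$ is spanned by products $e_{S_1}\cdots e_{S_k}$ with all $S_i$ nonempty. By part (1), such a product is nonzero only if the $S_i$ are pairwise disjoint.
\item When $k=t+1$, there cannot be $t+1$ pairwise disjoint nonempty subsets of $[t]$, so every spanning product is zero and $I_t^{t+1}=0$.
\item Conversely, $u_1u_2\cdots u_t=e_{[t]}\in I_t^t$, and it is nonzero because it is a basis element. Hence $I_t^{t}\neq0$, and the nilpotency index is exactly $t+1$.
\end{itemize}

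The only step needing care is the basis claim. Everything else follows by direct computation, with the counting argument for disjoint subsets being immediate.
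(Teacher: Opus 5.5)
Your proposal is correct and follows essentially the same route as the paper. Part (1) is direct computation with $u_i^2=0$, part (2) uses $e_S^2=0$ together with characteristic $2$, and part (3) combines $e_{[t]}\neq 0$ with a pigeonhole argument. Your spanning-product version of $I_t^{t+1}=0$ (a product $e_{S_1}\cdots e_{S_{t+1}}$ can only survive if $[t]$ contains $t+1$ pairwise disjoint nonempty subsets) is a more precise form of the paper's remark that some generator $u_i$ must occur twice. Your appeal to the squarefree monomial basis supplies what the paper takes for granted from its stated decomposition of $I_t$.
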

 
\begin{proof}
(1) If $S \cap T \neq \emptyset$, there exists an index $k \in S \cap T$, so that $e_S e_T$ contains $u_k^2 = 0$. If $S \cap T = \emptyset$, the products of distinct square-free variables combine directly to yield $e_{S \cup T}$.

(2) Let $r = \sum_{\emptyset \neq S \subseteq [t]} a_S e_S$ with $a_S \in \mathbb{F}_2$. Expanding $r^2$ over $\mathbb{F}_2$ leads to
\begin{equation}
r^2 = \sum_{S} a_S^2 e_S^2 + \sum_{S \neq T} 2 a_S a_T e_S e_T.
\end{equation}
By Proposition \ref{prop:mult-rules}, $e_S^2 = 0$ for all $S$. Then all cross terms vanish, establishing $r^2 = 0$.

(3) The product of all $t$ generators is $u_1 u_2 \cdots u_t = e_{[t]} \neq 0$, showing $I_t^t \neq 0$. Any product of $t + 1$ elements in $I_t$ contains at least one generator $u_i$ at least twice. By the square-zero relation $u_i^2 = 0$, we get $I_t^{t+1} = 0$.
\end{proof}

\begin{theorem}   
\label{lem:minimal-dimension}
Let $R$ be a finite-dimensional local commutative algebra over $\mathbb{F}_2$ and $r^2 = 0$ for all $r \in R$. If $R^t \neq 0$ for an integer $t \ge 1$, then 
$
\dim_{\mathbb{F}_2}(R) \ge 2^t - 1.
$
Furthermore, equality holds if and only if $R \cong I_t = \langle u_1, u_2, \ldots, u_t \rangle \subset \mathbb{F}_2[u_1, \ldots, u_t]/(u_1^2, \ldots, u_t^2)$ as algebras.
\end{theorem}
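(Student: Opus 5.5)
The plan is to exhibit $2^t-1$ linearly independent elements of $R$ directly from a nonzero $t$-fold product, and then to read off the equality case from that basis. First, a remark on the hypotheses. Since $r^2=0$ for every $r\in R$, $R$ cannot contain a unit element, because $1^2=1\neq 0$. So $R$ is to be read as a (non-unital) nil algebra, playing the role of the maximal ideal of a local algebra, exactly as $I_t$ sits inside $R_t$. I will only use three facts: $R$ is commutative and associative, $a^2=0$ for all $a\in R$, and $R^t\neq 0$.

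Since $R^t\neq 0$ and $R^t$ is spanned by $t$-fold products, there exist $a_1,\ldots,a_t\in R$ with $a_1a_2\cdots a_t\neq 0$. For each non-empty $S\subseteq[t]$ put $b_S=\prod_{i\in S}a_i$; commutativity makes this independent of the order of the factors. Because $a_i^2=0$, we get $b_Sb_T=0$ whenever $S\cap T\neq\emptyset$, and $b_Sb_T=b_{S\cup T}$ when $S\cap T=\emptyset$. This mirrors Proposition \ref{prop:mult-rules}(1).

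The key step is to show that the $2^t-1$ elements $b_S$ are linearly independent. Suppose $\sum_{S}c_Sb_S=0$ with not all $c_S$ zero, and choose $S_0$ of minimal cardinality with $c_{S_0}=1$.

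1. If $S_0=[t]$, the relation reads $b_{[t]}=0$, which is a contradiction.
2. Otherwise multiply the relation by $b_{[t]\setminus S_0}$. Any $T$ meeting $[t]\setminus S_0$ is killed. The surviving terms have $T\subseteq S_0$, and by minimality of $|S_0|$ the only such $T$ with $c_T\neq 0$ is $S_0$ itself. This leaves $b_{[t]}=a_1\cdots a_t=0$, again a contradiction.

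Hence $\dim_{\mathbb{F}_2}R\ge 2^t-1$. I expect this independence argument to be the only real obstacle; the minimal-support trick combined with multiplication by the complementary monomial handles it.

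For the equality case, assume $\dim R=2^t-1$. Then $\{b_S\}$ is a basis of $R$. The assignment $u_i\mapsto a_i$ extends to an algebra homomorphism $\mathbb{F}_2[u_1,\ldots,u_t]\to R$ on polynomials without constant term, since $R$ is commutative and associative. This map kills $u_i^2$, so it factors through $I_t\subset R_t$. It sends the basis $e_S$ of $I_t$ to the basis $b_S$ of $R$, so it is a bijective algebra homomorphism, i.e. $R\cong I_t$. Conversely, $I_t$ satisfies all the hypotheses by Proposition \ref{prop:mult-rules}(2),(3), and it has dimension $2^t-1$. Hence equality holds exactly for algebras isomorphic to $I_t$.
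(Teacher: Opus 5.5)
Your proof is correct and follows the same argument as the paper. You pick $a_1,\ldots,a_t$ with nonzero product, prove the $2^t-1$ partial products independent by taking a minimal-cardinality $S_0$ and multiplying by the complementary monomial, and read off $R\cong I_t$ from the resulting basis. Your additions are sound tightenings of that argument: reading $R$ as a non-unital nil algebra (since $1^2\neq 0$), handling $S_0=[t]$ separately instead of multiplying by $e_\emptyset=1$, and justifying the equality case via the homomorphism $u_i\mapsto a_i$ factoring through $I_t$ rather than the paper's bare claim that ``the multiplication rules are forced''.
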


\begin{proof}
Since $R^t \neq 0$, there exist $t$ elements $u_1,  \ldots, u_t \in R$ such that their product $u_1 \cdots u_t \neq 0$. 

We claim that the $2^t - 1$ elements $\{ e_S : \emptyset \neq S \subseteq [t] \}$ are linearly independent over $\mathbb{F}_2$. Suppose $\sum_{\emptyset \neq S \subseteq [t]} c_S e_S = 0$ for some $c_S \in \mathbb{F}_2$. Let $S_0$ be a minimal subset (with respect to cardinality) such that $c_{S_0} \neq 0$. Multiplying the linear combination by $e_{[t] \setminus S_0}$, we obtain
\begin{equation}
c_{S_0} e_{S_0} e_{[t] \setminus S_0} + \sum_{S \neq S_0} c_S e_S e_{[t] \setminus S_0} = 0.
\end{equation}
For any $S \neq S_0$ with $c_S \neq 0$, we get the following:
\begin{enumerate}
    \item If $S \setminus S_0 \neq \emptyset$, that is, $S \cap ([t] \setminus S_0)\neq \emptyset$, then there exists one variable $u_k$ to appear twice in the product $e_S e_{[t] \setminus S_0}$.  Since $u_k^2 = 0$ for all $k$, we have $e_S e_{[t] \setminus S_0} = 0$.
    \item If $S \subset S_0$, then $|S| < |S_0|$, which contradicts the minimality of $S_0$ (so $c_S = 0$).
\end{enumerate}

Thus, the equation reduces to $c_{S_0} u_1 u_2 \cdots u_t = 0$. Since $u_1 u_2 \cdots u_t \neq 0$, it follows that $c_{S_0} = 0$, establishing linear independence. Hence, $\dim_{\mathbb{F}_2}(R) \ge 2^t - 1$.

If $\dim_{\mathbb{F}_2}(R) = 2^t - 1$, the set $\{e_S : \emptyset \neq S \subseteq [t]\}$ forms a vector space basis for $R$. The multiplication rules are forced to be $e_S e_T = e_{S \cup T}$ if $S \cap T = \emptyset$ and $0$ otherwise. This yields an explicit algebra isomorphism $R \cong I_t$.
\end{proof}




\subsection{Set-Partition Permutations }

In this subsection, we define the set-partition polynomial permutation $P_t$, construct $f_t$, and evaluate its algebraic degree.


Let $S$ be a nonempty finite set.
A \emph{set partition} of $S$ is a collection $\mathcal P=\{B_1,\ldots,B_r\}$ of nonempty subsets of $S$ such that
\[
B_i\cap B_j=\varnothing
\text{ for }i\neq j, \mbox{ and } \bigcup_{i=1}^{r}B_i=S.
\]
The subsets $B_1,\ldots,B_r$ are called the \emph{blocks} of $\mathcal P$. 
We denote by $\Pi(S)$ the set of all set partitions of $S$.

Now let $S \subseteq [t]$. For a vector 
\[
x = (x_T)_{\emptyset \neq T \subseteq [t]} \in \mathbb{F}_2^N,
\]
whose coordinates are indexed by the nonempty subsets of $[t]$, we define a Boolean function $P_S$ indexed by the set $S$ as follows:
\begin{equation}
P_S(x) = \sum_{\mathcal{P} \in \Pi(S)} \prod_{T \in \mathcal{P}} x_T.
\end{equation}
This yields the vectorial Boolean function $P_t = (P_S)_{\emptyset \neq S \subseteq [t]} : \mathbb{F}_2^N \to \mathbb{F}_2^N$.

\begin{definition}\label{deft}
Define the Boolean function $f_t : \mathbb{F}_2^N \times \mathbb{F}_2^N \to \mathbb{F}_2$ by
\begin{equation}
f_t(x, z) = z \cdot P_t(x) = \sum_{\emptyset \neq S \subseteq [t]} z_S P_S(x).
\end{equation}
\end{definition}

\begin{example}
    For $S=\{1,2,3\}$, we have 
    $$\Pi(S)=\{\{1,2,3\},\{\{1\},\{2,3\}\},\{\{2\},\{1,3\}\},\{\{3\},\{1,2\}\},\{\{1\},\{2\},\{3\}\}\},
    $$
     $$
    P_{\{1\}}(x)=x_{\{1\}},~ P_{\{2\}}(x)=x_{\{2\}},~ P_{\{3\}}(x)=x_{\{3\}},
    $$
    $$
P_{\{1,2\}}(x)=x_{\{1,2\}}+x_{\{1\}}x_{\{2\}},~P_{\{1,3\}}(x)=x_{\{1,3\}}+x_{\{1\}}x_{\{3\}},~P_{\{2,3\}}(x)=x_{\{2,3\}}+x_{\{2\}}x_{\{3\}},
$$
    $$
    P_S(x)=x_{\{1,2,3\}}+x_{\{1\}}x_{\{2,3\}}+x_{\{2\}}x_{\{1,3\}}+x_{\{3\}}x_{\{1,2\}}+x_{\{1\}}x_{\{2\}}x_{\{3\}}.
    $$
    Thus $$
    P_3(x)=(P_{\{1\}}(x),P_{\{2\}}(x),P_{\{3\}}(x),P_{\{1,2\}}(x),P_{\{1,3\}}(x),P_{\{2,3\}}(x),P_S(x))
    $$
    and so  
    \begin{multline*}
        f_3(x,z)=x_{\{1\}}z_{\{1\}}+x_{\{2\}}z_{\{2\}}+x_{\{3\}}z_{\{3\}}+
    (x_{\{1,2\}}+x_{\{1\}}x_{\{2\}})z_{\{1,2\}}
    +(x_{\{1,3\}}+x_{\{1\}}x_{\{3\}})z_{\{1,3\}}\\+(x_{\{2,3\}}+x_{\{2\}}x_{\{3\}})z_{\{2,3\}}+
    (x_{\{1,2,3\}}+x_{\{1\}}x_{\{2,3\}}+x_{\{2\}}x_{\{1,3\}}+x_{\{3\}}x_{\{1,2\}}+x_{\{1\}}x_{\{2\}}x_{\{3\}})z_{\{1,2,3\}},
    \end{multline*}
   which is a Boolean function of $\mathbb{F}^7_2\times \mathbb{F}^7_2$ whose algebraic degree is four.
\end{example}

\subsection{Symplectic Isomorphism}

In this subsection, we introduce the direct sum $I_t \oplus  \operatorname{Hom}_{\mathbb{F}_2}(I_t, \mathbb{F}_2)$ defined with $(10)$ and prove that $f_t$ defined in Definition \ref{deft} induces a classical symplectic translation design.


Recall from Theorem \ref{lem:minimal-dimension} that $I_t$ is a finite-dimensional $\mathbb{F}_2$-algebra with basis $\{e_S : \emptyset \neq S \subseteq [t]\}$. Let $I_t^* = \operatorname{Hom}_{\mathbb{F}_2}(I_t, \mathbb{F}_2)$ be its dual vector space and let $\{e_S^* : \emptyset \neq S \subseteq [t]\}$ be the corresponding dual basis of $I_t^*$, satisfying
\[
e_S^*(e_T) = \delta_{S,T}
\]
for all nonempty $S, T \subseteq [t]$, where $\delta_{S,T}$ denotes the Kronecker delta.

Consider the vector space direct sum
$ I_t \oplus I_t^*.
$
Define multiplication on $I_t \oplus I_t^*$ by
\begin{equation}
(r, \alpha)(s, \beta) = (rs, r\beta + s\alpha),
\end{equation}
where $(r\beta)(x) = \beta(rx)$ for all $x \in I_t$. Define the circle operation $\circ$ on $I_t \oplus I_t^*$ as
\begin{equation}
a \circ b = a + b + ab.
\end{equation}

\begin{proposition}
Every element $a \in I_t \oplus I_t^*$ satisfies $a^2 = 0$. Consequently, $(I_t \oplus I_t^*, \circ)$ is an elementary abelian $2$-group of order $2^{2N}$.
\end{proposition}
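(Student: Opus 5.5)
The plan is to compute the square of a general element directly from the defining multiplication, and then read off the group axioms for $\circ$ from the identity $a^2=0$. Write $a=(r,\alpha)$ with $r\in I_t$ and $\alpha\in I_t^*$. By the definition of the product,
\[
a^2=(r,\alpha)(r,\alpha)=(r^2,\ r\alpha+r\alpha)=(r^2,\ 2\,r\alpha).
\]
The first component vanishes by Proposition~\ref{prop:mult-rules}(2). The second vanishes because $I_t^*$ is an $\mathbb{F}_2$-vector space, so $2\,r\alpha=0$. Hence $a^2=0$.

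Next I record the structural facts needed for $\circ$. The multiplication is $\mathbb{F}_2$-bilinear, since $(r\beta)(x)=\beta(rx)$ is linear in both $r$ and $\beta$. It is commutative, because $I_t$ is commutative and the second component $r\beta+s\alpha$ is symmetric under swapping $(r,\alpha)$ and $(s,\beta)$. It is also associative, which I check componentwise. In the first component this is associativity of $I_t$. In the second component, both $((r,\alpha)(s,\beta))(u,\gamma)$ and $(r,\alpha)((s,\beta)(u,\gamma))$ give
\[
rs\gamma+ru\beta+su\alpha ,
\]
using $(r(s\gamma))(x)=\gamma(srx)=((rs)\gamma)(x)$ together with commutativity of $I_t$. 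This associativity check is the only step needing care, and it is routine.

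With these facts the group axioms follow quickly.
\begin{enumerate}
\item \emph{Commutativity:} $a\circ b=b\circ a$ because the product is commutative.
\item \emph{Associativity:} expanding with bilinearity and associativity,
\[
(a\circ b)\circ c=a+b+c+ab+ac+bc+abc=a\circ(b\circ c).
\]
\item \emph{Identity:} $0$ is the identity, since $a\circ 0=a+0+a\cdot 0=a$.
\item \emph{Inverses:} each element is its own inverse, since $a\circ a=2a+a^2=0$ over $\mathbb{F}_2$, using $a^2=0$.
\end{enumerate}
Therefore $(I_t\oplus I_t^*,\circ)$ is an abelian group in which every element has order at most $2$, i.e.\ an elementary abelian $2$-group. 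Its order is $|I_t\oplus I_t^*|=2^{N}\cdot 2^{N}=2^{2N}$, since $\dim I_t=\dim I_t^*=N=2^t-1$.
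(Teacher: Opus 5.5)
Your proof is correct and follows the paper's argument: you use the same componentwise computation $a^2=(r^2,2r\alpha)=(0,0)$ and the same observation $a\circ a=2a+a^2=0$. The only difference is in how you get associativity of $\circ$. You first check explicitly that the underlying product is associative, which the paper takes for granted, and then expand $(a\circ b)\circ c$ directly; the paper gets it in one line from the identity $1+(a\circ b)=(1+a)(1+b)$ in the algebra with a unit adjoined.
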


\begin{proof}
For any $a = (r, \alpha) \in I_t \oplus I_t^*$, $a^2 = (r^2, 2r\alpha) = (0, 0)$ because $r^2 = 0$ in $I_t$ by Proposition \ref{prop:mult-rules}. Associativity follows from $1 + (a \circ b) = (1 + a)(1 + b)$. Finally, $a \circ a = 2a + a^2 = 0$, which implies every non-zero element has order $2$.
\end{proof}


Define the canonical quadratic form $q : I_t \oplus I_t^* \to \mathbb{F}_2$ by
\begin{equation}
q(r, \alpha) = \alpha(r).
\end{equation}
Its polar form $B(a, b) = q(a+b) + q(a) + q(b)$ is an alternating, non-degenerate symplectic bilinear form on $I_t \oplus I_t^*$, see \cite{G}. For $x, z \in \mathbb{F}_2^N$, define $r : \mathbb{F}_2^N \to I_t$ and $\beta : \mathbb{F}_2^N \to I_t^*$ respectively by 
\begin{equation}
r(x) = \sum_{\emptyset \neq S \subseteq [t]} P_S(x) e_S, \quad \beta(z) = \sum_{\emptyset \neq S \subseteq [t]} z_S e_S^*.
\end{equation}

\begin{definition}
Define the coordinate transformation map $\Phi_t : \mathbb{F}_2^{2N} \to I_t \oplus I_t^*$ by
\begin{equation}
\Phi_t(x, z) = \bigl(r(x), \beta(z) + r(x)\beta(z)\bigr).
\end{equation}
\end{definition}

\begin{theorem}\label{thm:isomorphism}
The mapping $\Phi_t : (\mathbb{F}_2^{2N}, +) \to (I_t \oplus I_t^*, \circ)$ is a group isomorphism.
\end{theorem}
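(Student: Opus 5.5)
The plan is to prove that $\Phi_t$ factors as a $\circ$-product of two simpler maps, each of which is evidently a homomorphism, and then to get bijectivity from a triangularity argument. Throughout I will work in the unitization $\mathbb{F}_2\oplus I_t\oplus I_t^*$, writing $1+a$ for $(1,a)$. The multiplication $(r,\alpha)(s,\beta)=(rs,r\beta+s\alpha)$ is commutative and associative: associativity uses $(rs)\beta=r(s\beta)$, which holds because $I_t$ is commutative and associative. Hence $1+(a\circ b)=(1+a)(1+b)$, and $\circ$ is commutative and associative, as already noted in the preceding proposition.

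\textbf{Step 1 (the key identity).} I will show that in the unitization $\mathbb{F}_2\oplus I_t$ of $I_t$,
\[
1+r(x)=\prod_{\emptyset\neq T\subseteq[t]}\bigl(1+x_T e_T\bigr).
\]
Expanding the product gives a sum over families $\{T_1,\dots,T_k\}$ of distinct nonempty subsets, with term $x_{T_1}\cdots x_{T_k}e_{T_1}\cdots e_{T_k}$. By Proposition~\ref{prop:mult-rules}(1), such a term survives only when the $T_i$ are pairwise disjoint, and then it equals $e_S$ with $S=\bigcup T_i$. Thus the coefficient of $e_S$ is exactly $\sum_{\mathcal P\in\Pi(S)}\prod_{T\in\mathcal P}x_T=P_S(x)$, and the empty family contributes the $1$.

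Since $e_T^2=0$, each factor satisfies $(1+x_Te_T)(1+x'_Te_T)=1+(x_T+x'_T)e_T$. Therefore $(1+r(x))(1+r(x'))=1+r(x+x')$, that is, $r(x)\circ r(x')=r(x+x')$. So $x\mapsto(r(x),0)$ is a homomorphism $(\mathbb{F}_2^N,+)\to(I_t\oplus I_t^*,\circ)$. I expect this identity to be the main point of the whole proof; everything else is formal.

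\textbf{Step 2 (factorization).} A direct computation gives
\[
(r,0)\circ(0,\beta)=(r,0)+(0,\beta)+(r,0)(0,\beta)=(r,\beta+r\beta),
\]
so $\Phi_t(x,z)=(r(x),0)\circ(0,\beta(z))$. Since $(0,\beta)(0,\beta')=0$, we have $(0,\beta)\circ(0,\beta')=(0,\beta+\beta')$, so $z\mapsto(0,\beta(z))$ is also a homomorphism, because $\beta$ is linear. Commutativity and associativity of $\circ$ then give
\[
\Phi_t(x+x',z+z')=\Phi_t(x,z)\circ\Phi_t(x',z').
\]

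\textbf{Step 3 (bijectivity).} Because both groups have order $2^{2N}$, it suffices to show that the kernel of $\Phi_t$ is trivial. If $\Phi_t(x,z)=(0,0)$, then $r(x)=0$. Suppose $x\neq0$ and choose $T_0$ of minimal cardinality with $x_{T_0}=1$. Every partition of $T_0$ other than $\{T_0\}$ uses strictly smaller blocks, on which $x$ vanishes, so $P_{T_0}(x)=x_{T_0}=1$, a contradiction. Hence $x=0$. The second coordinate then reads $\beta(z)+0=0$, which forces $z=0$. Thus $\Phi_t$ is an isomorphism.
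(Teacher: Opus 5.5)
Your proof is correct. Its skeleton matches the paper's: establish $r_{x+y}=r_x+r_y+r_xr_y$, then use triangularity for bijectivity. But each of your three ingredients is a genuinely different argument.

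The paper justifies $r_{x+y}=r_x+r_y+r_xr_y$ only by a one-line remark that the mixed terms in the expansion of $P_S(x+y)$ are exactly those contributing to $r_xr_y$. It then checks the second coordinate by direct manipulation: $(1+r_y)\alpha_z+(1+r_x)\alpha_w=(1+r_x)(1+r_y)(\beta_z+\beta_w)$.

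Your product formula $1+r(x)=\prod_{T}(1+x_Te_T)$ turns the first identity into a transparent consequence of $e_T^2=0$, since each factor $x_T\mapsto 1+x_Te_T$ is additive. It is a rigorous replacement for the paper's sketch. Your factorization $\Phi_t(x,z)=(r(x),0)\circ(0,\beta(z))$ explains why the second coordinate has the shape $\beta+r\beta$. It also replaces the paper's computation with commutativity and associativity of $\circ$.

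For bijectivity, the paper argues constructively. It recovers $x$ from $r_x$ by solving the unitriangular system, then recovers $\beta_z$ by applying the involution $1+r_x$. This gives the inverse explicitly and does not rely on the homomorphism property. Your trivial-kernel argument with a minimal-cardinality $T_0$ is shorter. It does rely on the homomorphism property and on the count $|I_t\oplus I_t^*|=2^{2N}$. In exchange, it never needs invertibility of $1+r_x$, because $r(x)=0$ for elements of the kernel.
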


\begin{proof}
By the definition of group isomorphism, first we need to prove that $$\Phi_t((x, z) + (y, w)) = \Phi_t(x, z) \circ \Phi_t(y, w)$$ for any $(x, z), (y, w) \in \mathbb{F}_2^{2N}$. For brevity, write
$
r_x=r(x), \beta_z=\beta(z),
$
and set
$
\alpha_z=\beta_z+r_x\beta_z=(1+r_x)\beta_z.
$
Then
$
\Phi_t(x,z)=(r_x,\alpha_z).
$ We first record two identities. Since \(\beta\) is linear,
$
\beta_{z+w}=\beta_z+\beta_w.
$
Also, expanding the defining set-partition sums for \(P_S(x+y)\) gives
$
r_{x+y}=r_x+r_y+r_xr_y.
$
Indeed, the mixed terms in the expansion are exactly the terms contributing to \(r_xr_y\).

Now compute the circle product. Since
\[
(r,\alpha)\circ(s,\gamma)=(r+s+rs,\ \alpha+\gamma+r\gamma+s\alpha),
\]
we obtain
\[
\begin{aligned}
\Phi_t(x,z)\circ\Phi_t(y,w)
&=(r_x,\alpha_z)\circ(r_y,\alpha_w)\\
&=(r_x+r_y+r_xr_y,\ \alpha_z+\alpha_w+r_x\alpha_w+r_y\alpha_z)\\
&=(r_{x+y},\ \alpha_z+\alpha_w+r_x\alpha_w+r_y\alpha_z).
\end{aligned}
\]
For the second component, using \(\alpha_z=(1+r_x)\beta_z\) and
\(\alpha_w=(1+r_y)\beta_w\), we get
\[
\begin{aligned}
&\alpha_z+\alpha_w+r_x\alpha_w+r_y\alpha_z\\
&=(1+r_x)\beta_z+(1+r_y)\beta_w
  +r_x(1+r_y)\beta_w+r_y(1+r_x)\beta_z\\
&=(1+r_x)(1+r_y)\beta_z+(1+r_x)(1+r_y)\beta_w\\
&=(1+r_x)(1+r_y)(\beta_z+\beta_w)\\
&=(1+r_x+r_y+r_xr_y)\beta_{z+w}\\
&=(1+r_{x+y})\beta_{z+w}=\beta_{z+w}+r_{x+y}\beta_{z+w}.
\end{aligned}
\]
Therefore
\[
\Phi_t(x,z)\circ\Phi_t(y,w)
=
\bigl(r_{x+y},\ \beta_{z+w}+r_{x+y}\beta_{z+w}\bigr)
=
\Phi_t(x+y,z+w).
\]
Thus \(\Phi_t\) is a group homomorphism.

It remains to prove that \(\Phi_t\) is bijective. The map
\[
x\mapsto r_x=\sum_{\emptyset\neq S\subseteq[t]}P_S(x)e_S
\]
is bijective. Indeed, in the basis \(\{e_S\}\), the coordinate \(P_S(x)\) equals \(x_S\) plus a polynomial depending only on variables \(x_T\) with \(|T|<|S|\). Hence \(P_t\) is lower triangular with unit diagonal, so it is invertible over \(\mathbb F_2\). Therefore \(x\) is uniquely determined by \(r_x\).

Now fix \(x\). The map
$
z\mapsto \beta_z
$
is a linear isomorphism, and \(1+r_x\) is invertible in \(I_t\) because
$
(1+r_x)^2=1+r_x^2=1
$
by Proposition~\ref{prop:mult-rules}(2). Hence
$
z\mapsto (1+r_x)\beta_z=\beta_z+r_x\beta_z
$
is also a bijection. Consequently, for each
$
(r,\alpha)\in I_t\oplus I_t^*,
$
there is a unique \(x\) with \(r_x=r\), and then a unique \(z\) with
$
(1+r_x)\beta_z=\alpha.
$
Thus \(\Phi_t\) is bijective.

Therefore \(\Phi_t\) is a group isomorphism  and this completes the proof.
\end{proof}


\begin{theorem}\label{thm:pullback}
For all $(x, z) \in \mathbb{F}_2^{2N}$, we have $q(\Phi_t(x, z)) = f_t(x, z)$.
\end{theorem}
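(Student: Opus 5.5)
The plan is a direct computation: unwind the definitions of $q$, $\Phi_t$ and the module action of $I_t$ on $I_t^*$, and let the square-zero property of $I_t$ (Proposition~\ref{prop:mult-rules}(2)) remove the correction term that $\Phi_t$ adds in the second coordinate. Write $r_x=r(x)$ and $\beta_z=\beta(z)$ as in the proof of Theorem~\ref{thm:isomorphism}, so that $\Phi_t(x,z)=(r_x,\beta_z+r_x\beta_z)$. Since $q(r,\alpha)=\alpha(r)$, we get
\[
q(\Phi_t(x,z))=(\beta_z+r_x\beta_z)(r_x)=\beta_z(r_x)+(r_x\beta_z)(r_x).
\]

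Next I will use the defining rule $(r\beta)(y)=\beta(ry)$ to rewrite the second term as $\beta_z(r_x r_x)=\beta_z(r_x^2)$. By Proposition~\ref{prop:mult-rules}(2), every element of $I_t$ squares to zero, so $r_x^2=0$. Because $\beta_z$ is a linear functional, $\beta_z(0)=0$, and the second term vanishes. This leaves $q(\Phi_t(x,z))=\beta_z(r_x)$.

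Finally, I will evaluate $\beta_z(r_x)$ in the dual bases. Using $e_S^*(e_T)=\delta_{S,T}$ and bilinearity,
\[
\beta_z(r_x)=\sum_{S}\sum_{T} z_S\,P_T(x)\,e_S^*(e_T)=\sum_{\emptyset\neq S\subseteq[t]} z_S P_S(x).
\]
By Definition~\ref{deft}, this is exactly $z\cdot P_t(x)=f_t(x,z)$, which proves the identity.

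There is no real obstacle here. The one point needing care is to apply the module action in the direction the paper defines it, namely $(r_x\beta_z)(r_x)=\beta_z(r_x\cdot r_x)$ rather than some other contraction, so that the square-zero relation applies. Commutativity of $I_t$ means there is no ordering ambiguity in the product $r_x r_x$.
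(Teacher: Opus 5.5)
Your proposal is correct and follows the paper's proof essentially step for step. You expand $q(\Phi_t(x,z))=\beta_z(r_x)+\beta_z(r_x^2)$ using $(r\beta)(y)=\beta(ry)$, remove the second term by the square-zero property of $I_t$, and evaluate $\beta_z(r_x)$ in the dual basis to obtain $\sum_S z_S P_S(x)=f_t(x,z)$.
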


\begin{proof}
Evaluating $q$ on $\Phi_t(x, z) = (r, \beta + r\beta)$, we get
\begin{equation}
q(\Phi_t(x, z)) = (\beta + r\beta)(r) = \beta(r) + \beta(r^2).
\end{equation}
Observe that $\beta(r^2) = 0$ since $r^2 = 0$ for all $r \in I_t$. Expanding $\beta(r)$ in coordinates leads to
\begin{equation}
\beta(r) = \left(\sum_{\emptyset \neq S} z_S e_S^*\right) \left(\sum_{\emptyset \neq T} P_T(x) e_T\right) = \sum_{\emptyset \neq S} z_S P_S(x) = f_t(x, z).
\end{equation}
\end{proof}

\subsection{The proof}

Now we are ready to prove Theorem \ref{thm:main-intro}.

{\bf Proof of Theorem \ref{thm:main-intro}:}

\begin{enumerate}

\item  We prove that for every integer $t \ge 2$, $f_t$ is a Maiorana--McFarland bent function on $\mathbb{F}_2^{2N}$.

The single-block partition $\mathcal{P} = \{S\}$ gives the term $x_S$. Thus, $P_S(x) = x_S + Q_S(x)$, where $Q_S(x)$ depends strictly on variables $x_T$ with $|T| < |S|$. Thus, the system $y = P_t(x)$ is lower triangular with unit diagonal, making $P_t$ an invertible polynomial mapping (a permutation on $\mathbb{F}_2^N$). By Dillon's Maiorana--McFarland theorem \cite{D}, $f_t(x, z) = z \cdot P_t(x)$ is bent.

\item We prove that for every integer $t \ge 2$, the algebraic degree of $f_t$ is precisely
$
\deg(f_t) = t + 1.
$

Since $\deg(P_S) \le |S|$, we have $\deg(z_S P_S(x)) \le |S| + 1 \le t + 1$, so $\deg(f_t) \le t + 1$.

Now inspect the term indexed by $S = [t]$. The partition $\mathcal{P}_0 = \{\{1\}, \{2\}, \ldots, \{t\}\} \in \Pi([t])$ produces the unique monomial $\prod_{i=1}^t x_{\{i\}}$. Thus, $f_t(x, z)$ contains the monomial
\begin{equation}
M^* = z_{[t]} \prod_{i=1}^t x_{\{i\}}.
\end{equation}
Because $M^*$ uniquely contains $z_{[t]}$, it cannot be produced by any other term $z_S P_S(x)$ for $S \neq [t]$. Consequently, $M^*$ experiences no algebraic cancellation in the ANF of $f_t$. Since $\deg(M^*) = 1 + t$, we conclude $\deg(f_t) = t + 1$.

\item  We prove that the translation design $\operatorname{Dev}(D_{f_t})$ is isomorphic to  $S^\pm(2(2^t - 1))$.

Let $D_{f_t} = \{ v \in \mathbb{F}_2^{2N} : f_t(v) = 1 \}$ and $D_q = \{ a \in I_t \oplus I_t^* : q(a) = 1 \}$. By Theorem \ref{thm:pullback}, we get $D_{f_t} = \Phi_t^{-1}(D_q)$, and by Theorem \ref{thm:isomorphism}, $\Phi_t$ is a group isomorphism from $(\mathbb{F}_2^{2N}, +)$ to $(I_t \oplus I_t^*, \circ)$. Thus, $\Phi_t$ maps translates of $D_{f_t}$ in $(\mathbb{F}_2^{2N}, +)$ directly to circle translates of $D_q$ in $(I_t \oplus I_t^*, \circ)$. Since $(I_t \oplus I_t^*, \circ, q)$ defines the canonical symplectic design $S^\pm(2N)$, we conclude $\operatorname{Dev}(D_{f_t}) \cong S^\pm(2(2^t - 1))$.

\end{enumerate}
This completes the proof.
\section{Conclusion}

In this paper, we have resolved a conjecture of Polujan and Pott. By combining the minimal algebra  with set-partition polynomial permutations, we constructed an infinite family of Maiorana--McFarland bent functions $f_t$ in $2(2^t-1)$ variables whose algebraic degree $\deg(f_t) = t + 1$ grows without bound. Furthermore, by uncovering an elementary abelian group structure within the  direct sum $  I_t \oplus I_t^*$, we proved that the translation designs generated by $f_t$ are isomorphic to classical symplectic designs $S^\pm(2(2^t-1))$.

\section*{Declaration on the Use of Generative AI}

During the preparation of this manuscript, ChatGPT 5.6 Sol was used exclusively in Section IV, where it assisted with problem-solving as well as the verification and revision of the results. All other sections, including their statements and proofs, were developed entirely by the authors. The authors reviewed and approved all AI-assisted content in Section IV and made the final decisions regarding the statements, proofs, organization, and conclusions of the paper. The authors take full responsibility for the correctness and content of the manuscript.

\end{document}